\pgfplotsset{compat=1.15}
\newacronym{ba}{BA}{Blahut-Arimoto}
\newacronym{cc}{C-C}{capacity-cost}
\newacronym{wgd}{WGD}{Wasserstein gradient descent}
\newacronym{kld}{KLD}{Kullback-Leibler divergence}
\newacronym{rd}{R-D}{rate-distortion}
\newacronym{ac}{a.c.}{absolutely continuous}
\newacronym{rnd}{RND}{Radon-Nikodym derivative}
\newacronym{pdf}{PDF}{probability density function}
\newacronym{jko}{JKO}{Jordan-Kinderlehrer-Otto}
\newacronym{mci}{MCI}{Monte-Carlo integration}
\newacronym{is}{IS}{importance sampling}
\newacronym{csir}{CSIR}{channel state information at receiver}
\newcommand*{\br}[1]{\left ( #1 \right )}
\newcommand*{\brrr}[1]{\left \{ #1 \right \}}
\newcommand{\RR}{\mathbb{R}}
\newcommand{\calX}{\mathcal{X}}
\newcommand{\calY}{\mathcal{Y}}
\newcommand{\calP}{\mathcal{P}}
\newcommand{\calW}{\mathcal{W}}
\newcommand{\calN}{\mathcal{N}}
\newcommand{\kx}{\kappa_{x}}
\newcommand{\dx}{\mathrm{d}x}
\newcommand{\dy}{\mathrm{d}y}
\newcommand{\dmu}{\mathrm{d}\mu}
\newcommand{\dnu}{\mathrm{d}\nu}
\newcommand{\dkx}{\mathrm{d}\kx}
\newcommand*{\st}{\mathrm{s. \, t. \,}}
\DeclareMathOperator*{\argmax}{arg\,max}
\DeclareMathOperator*{\arginf}{arg\,inf}
\newtheorem{theorem}{Theorem}
\theoremstyle{definition}
\tikzset{%
    smallblock/.style={draw, fill=white, minimum height=1em, minimum width=1em},
    block-common/.style={draw, fill=white, minimum height=1.5em, minimum width=4em},
    block/.style={rectangle, block-common, align=center},
    txtblock/.style={block, align=center, minimum height=4em},
    bigblock/.style={block, minimum height=8em},
    txtbigblock/.style={bigblock, align=center},
    input/.style={inner sep=1pt},       
    output/.style={inner sep=1pt},      
    sum/.style = {draw, fill=white, circle, minimum size=1.1em, inner sep=0pt,
      font={\small$+$}},
    prod/.style = {draw, fill=white, circle, minimum size=1.1em, inner sep=0pt,
      font={\normalsize$\times$}},
    pinstyle/.style = {pin edge={to-,thin,black}}
}
\begin{document}
\title{Computing Capacity-Cost Functions for Continuous Channels in Wasserstein Space
\thanks{The authors were supported in part by the
    German Federal Ministry of Education and Research (BMBF)
    in the programme “Souver\"an. Digital. Vernetzt.”
    within the research hub 6G-life under Grant 16KISK002,
    and also by the Bavarian Ministry of Economic Affairs,
    Regional Development and Energy within the project 6G Future Lab Bavaria.
    U. M{\"o}nich and H. Boche were also supported by the BMBF within the project "Post Shannon Communication - NewCom" under Grant 16KIS1003K.
    X. Li, V. Andrei, and H. Boche also acknowledge the financial support the financial support by the BMBF Projects QD-CamNetz, Grant
    16KISQ077, QuaPhySI, Grant 16KIS1598K, and QUIET, Grant 16KISQ093.
    }
}

\author{\IEEEauthorblockN{Xinyang Li\IEEEauthorrefmark{1},
Vlad C. Andrei\IEEEauthorrefmark{1}, Ullrich J. M{\"o}nich\IEEEauthorrefmark{1}, Fan Liu\IEEEauthorrefmark{2} and
Holger Boche\IEEEauthorrefmark{1}\IEEEauthorrefmark{3}}
\IEEEauthorblockA{\IEEEauthorrefmark{1}Chair of Theoretical Information Technology, Technical University of Munich, Munich, Germany\\
\IEEEauthorrefmark{2}School of Information Science and Engineering, Southeast University, Nanjing 210096, China\\
\IEEEauthorrefmark{1}BMBF Research Hub 6G-life,
\IEEEauthorrefmark{3}Munich Center for Quantum Science and Technology,
\IEEEauthorrefmark{3}Munich Quantum Valley\\
Email: \{xinyang.li, vlad.andrei, moenich\}@tum.de, f.liu@ieee.org, boche@tum.de}
}

\maketitle

\allowdisplaybreaks
\glsdisablehyper

\begin{abstract}
This paper investigates the problem of computing \ac{cc} functions for continuous channels. Motivated by the \ac{kld} proximal reformulation of the classical \ac{ba} algorithm, the Wasserstein distance is introduced to the proximal term for the continuous case, resulting in an iterative algorithm related to
the Wasserstein gradient descent. Practical implementation involves moving particles along the negative gradient direction of the objective function's first variation in the Wasserstein space and approximating integrals by the \ac{is} technique. Such formulation is also applied to the \ac{rd} function for continuous source spaces and thus provides a unified computation framework for both problems. 
\end{abstract}

\begin{IEEEkeywords}
Blahut-Arimoto algorithm, capacity-cost function, Wasserstein gradient descent, proximal point method.
\end{IEEEkeywords}

\glsresetall

\section{Introduction}\label{sec:intro}
As the maximum rate for reliable message transmission, channel capacity plays a crucial role in determining the performance limits of communication systems. It is given by the maximum mutual information between the channel input and output, optimized over all possible input distribution\cite{gallager1968information}. If an input cost constraint is further imposed, e.g., input power, the relationship between channel capacity and input cost is referred to as \ac{cc} function.

For discrete memoryless channels, where the input and output alphabets are finite, the \ac{ba} algorithm\cite{blahut1972computation, arimoto1972algorithm} is capable of computing the \ac{cc} function by iteratively updating the input distribution and is shown to converge to the global optimum. In \cite{matz2004information, naja2009geometrical}, the computation problem is reformulated as a \ac{kld} proximal point problem, providing an alternative point of view and a way to accelerate convergence.

It has been shown in the current publications\cite{boche2023algorithmic,lee2024computability} that the \ac{ba} algorithms or the approach from\cite{matz2004information, naja2009geometrical} cannot be used to approximately calculate optimal input distributions. The results\cite{boche2023algorithmic,lee2024computability} also indicate that calculating the \ac{cc} function for continuous channels is very challenging.
Since the input distribution is described by a \ac{pdf}, or more generally, probability measure, standard optimization tools in finite-dimensional Euclidean spaces are no longer applicable. Existing works include \cite{chang1988calculating, dauwels2005numerical}, both of which approximate the channel \ac{pdf} by a finite partition of the output space and sequentially move a particle set in the input space to increase the inner integral of the \ac{cc} function. However, \cite{chang1988calculating} assumes that all the local optima of the inner integral at each iteration can be found to guarantee the convergence, which is often too idealistic. \cite{dauwels2005numerical} performs the steepest gradient descent and coincides with our proposed method to some extent but doesn't provide a convergence analysis and detailed interpretation. Moreover, numerical integration with fixed finite partitions may result in an inaccurate approximation or extremely complex computation for high-dimensional cases.

In this work, we adopt recent results on Wasserstein gradient flow\cite{ambrosio2008gradient,santambrogio2017euclidean}, originated from the field of optimal transport \cite{villani2009optimal}, to reformulate the computation problem in the Wasserstein space by following the ideas in \cite{matz2004information,ambrosio2008gradient}. We develop a gradient descent type method in the Wasserstein space, alternately updating the input and output probability measures, whose local convergence is shown to be guaranteed under certain conditions. Numerically, the transport of probability measures is implemented by applying the corresponding transformation on particles, and the \ac{is} technique is performed to estimate the involved integrals. Simulation results\footnote{Code is available at {\url{https://github.com/xinyanglii/continuous_channel}}.} are provided for both MIMO-AWGN and fading channels, demonstrating a convincing performance of the proposed method.

\section{Background and Notations}
\subsection{Blahut-Arimoto Algorithm}

Let $P_{Y|X}$ be a channel transition matrix from a finite input space $\calX$ to a finite output space $\calY$. The set of all possible distributions on $\calX$ is a probability simplex\cite{boyd2004convex}
\begin{equation}
    \calP(\calX) \triangleq \brrr{P_X \,\bigg| \, \forall x \in \calX, P_X(x) \ge 0, \sum_{x\in \calX} P_X(x) = 1}.
    \label{eq:simplex}
\end{equation}
The \ac{ba} algorithm is proposed in\cite{blahut1972computation, arimoto1972algorithm} to solve the optimization problem arising from computing the corresponding \ac{cc} function:
\begin{equation}
   C(B) \triangleq \max_{P_X\in \calP(\calX)} I(X,Y),\ \st \sum_{x\in \mathcal{X}} P_{X}(x) b(x) \le B,
   \label{eq:discrete_cc}
\end{equation}
where $I(X,Y)$ is the mutual information between channel input $X$ and output $Y$, $b: \calX \to [0, \infty)$ is a cost function, imposing an additional input constraint. The \ac{ba} algorithm formulates \eqref{eq:discrete_cc} as a double maximization problem over $P_X$ and $P_{X|Y}$, 
and iteratively updates them until convergence.

Another formulation of problem \eqref{eq:discrete_cc} is provided by\cite{matz2004information, naja2009geometrical} from the perspective of proximal point methods\cite{rockafellar1976monotone}. Specifically, the update rule on $P_X$ at step $k$ is given by
\begin{equation}
\begin{split}
    &P^{(k)}_X =  \argmax_{P_X \in \calP(\calX)} I(X,Y) + \lambda \sum_{x\in \mathcal{X}} P_{X}(x) b(x) \\
    & \hspace{25mm}  - \frac{D(P_X \| P_X^{(k-1)})}{ \tau_k} + D(P_Y \| P_Y^{(k-1)}),
    \label{eq:kld_reform}
\end{split}
\end{equation}
where $D(\cdot \| \cdot)$ is the \ac{kld}, $\lambda$ is the Lagrangian multiplier, and $\tau_k$ is the proximal step size at step $k$.
The solution to \eqref{eq:kld_reform} is 
\begin{equation*}
    P^{(k)}_X(x)= \frac{P_X^{(k-1)}(x) \exp \br{ \tau_k  D_\lambda^{(k)}(x)}}{\sum_{x'}P_X^{(k-1)}(x') \exp\br{ \tau_k  D_\lambda^{(k)}(x')}}
\end{equation*}
for all $x\in \calX$, where
\begin{equation*}
    D_\lambda^{(k)}(x) \triangleq \sum_{y} P_{Y|X}(y|x) \log\frac{P_{Y|X}(y|x)}{P_Y^{(k-1)}(y)} - \lambda b(x).
\end{equation*}
The update rule becomes the classical \ac{ba} step when $\tau_k = 1$,
and thus not only interprets the \ac{ba} algorithm from another viewpoint but also enables accelerating the convergence by choosing $\tau_k$ adaptively at each step.

\subsection{Wasserstein Space}

For the continuous case, the probability space can not be considered as a finite vector subspace as in \eqref{eq:simplex}. Necessary concepts from measure theory and optimal transport are thus briefly reviewed in this subsection. 

For the $n$-dimensional Euclidean space $\RR^n$, the set of all Borel probability measures is denoted by $\calP(\RR^n)$ and its subset with finite second moment by
\begin{equation*}
    \calP_2(\RR^n) \triangleq \brrr{\mu \in  \calP(\RR^n) \Big| \int_{\RR^n} \| x \|_2^2\  \mu(\dx) < + \infty }.
\end{equation*}
For two measures $\mu, \nu \in \calP(\RR^n)$, we write $\mu \ll \nu$ if $\mu$ is \ac{ac} with respect to $\nu$, and as a result, the \ac{rnd} $\frac{\dmu}{\dnu}: \RR^n \to [0, +\infty)$ exists\cite{billingsley2012probability}. If $\mu$ is \ac{ac} with respect to the Lebesgue measure (\ac{ac} for short), the \ac{rnd} $\frac{\dmu}{\dx}$ normally refers to the \ac{pdf}. 

Given two Borel spaces $(\RR^n, \mathcal{B}(\RR^n))$ and $(\RR^m, \mathcal{B}(\RR^m))$, The Markov kernel\cite{ccinlar2011probability} is a mapping $\kappa: \RR^n  \times \mathcal{B}(\RR^m)  \to [0, 1]$ such that for every Borel set $B \in \mathcal{B}(\RR^m)$, the mapping $x \mapsto \kx (B)$ is a $\mathcal{B}(\RR^n)$-measurable function, and $\kx$ is a Borel probability measure on $\RR^m$ for all $x \in \RR^n$. Let $T: \RR^n \to \RR^n$ be a $\mathcal{B}(\RR^n)$-measurable function, the pushforward measure\cite{ambrosio2008gradient} of $\mu$ under $T$ is $T_\#\mu$ such that
\begin{equation}
    T_\#\mu(B) = \mu(T^{-1}(B)), \quad \forall B \in \mathcal{B}(\RR^n).
\end{equation}

The 2-Wasserstein distance\cite{ambrosio2008gradient} between two measures $\mu, \nu \in \calP_2(\RR^n)$ is defined as
\begin{equation}\label{eq:w2prob}
     W_2(\mu, \nu) \triangleq \left (\min_{\pi \in \Pi(\mu, \nu)}\int_{\mathbb{R}^n\times \mathbb{R}^n} \|x - y \|_2^2\  \pi(\dx,\dy) \right)^{\frac{1}{2}},
\end{equation}
where $\Pi(\mu, \nu)$ is the set of all couplings between $\mu$ and $\nu$, i.e., all the product probability measures on $\RR^n \times \RR^n$ such that the first and second marginals are $\mu$ and $\nu$ respectively. The optimal coupling $\pi^*$ achieving the minimum in \eqref{eq:w2prob} is called the optimal transport plan. In the following, we call $W_2$ the Wasserstein distance for convenience. The optimization problem involved in $W_2$ can be recast as the dual form
\begin{equation}
    \sup_{\varphi \in L^1(\mu)} \int \varphi(x) \mu(\dx) + \int \varphi^c(x) \nu(\dx)
    \label{eq:wassersteindual}
\end{equation}
where $\varphi^c$ is the c-transform of $\varphi$:
\begin{equation}
    \varphi^c(y) = \sup_{x \in \RR^n} \varphi(x) - \| x- y\|_2^2, \quad \forall y \in \RR^n,
\end{equation}
and the resulting optimizer $\varphi$ to \eqref{eq:wassersteindual} is called the Kantorovich potential.
Due to Brenier's theorem\cite{brenier1991polar, santambrogio2017euclidean}, if $\mu$ is \ac{ac}, the optimal transport plan $\pi^*$ is uniquely determined by $\pi^* = (\mathrm{Id}, T_\mu^\nu)_\# \mu$ with the identity map $\mathrm{Id}$ on $\RR^n$, the optimal transport map $T_\mu^\nu(x) = x - \nabla \varphi(x)$ and the Kantorovich potential $\varphi$.

It is shown that $W_2$ satisfies the properties of a metric, and thus $\calW_2 \triangleq (\calP_2(\RR^n), W_2)$ forms a metric space called the Wasserstein space\cite{villani2009optimal}. It also turns out that $\calW_2$ is equipped with a differential structure: for a functional $F: \calP_2(\RR^n) \to \RR$, its first variation\cite{santambrogio2015optimal}, if exists, is given by $\frac{\delta F}{\delta \mu}: \RR^n \to \RR$ such that for all $\chi$ with $\mu + \epsilon \chi \in \calP_2(\RR^n) $
\begin{equation*}
    \lim_{\epsilon \to 0^+} \frac{F(\mu + \epsilon \chi) - F(\mu)}{\epsilon} = \int_{\RR^n} \frac{\delta F}{\delta \mu} (x)\  \chi(\dx).
\end{equation*}
A simple example is when $F(\mu) = \int V(x) \mu(\dx)$ for a function $V: \RR^n \to \RR$, then $\frac{\delta F}{\delta \mu} (x) = V(x)$. Moreover, the first variation of the function $\mu \mapsto W_2^2(\mu, \nu)$ is given by $\frac{\delta W_2^2(\cdot, \nu)}{\delta \mu} (x) = \varphi(x)$ up to additive constants with the Kantorovich potential $\varphi$\cite{santambrogio2015optimal}.

\section{Proposed Method}
\subsection{Problem Formulation}

Let $\RR^n$ and $\RR^m$ be the input and output space. Throughout this paper, the channel is assumed to be characterized by a Markov kernel $\kappa$ with $\kx$ \ac{ac} for every $x \in \RR^n$ and $\kx(B)$ is differentiable on $\RR^n$ for every $B \in \mathcal{B}(\RR^m)$. Hence, the conditional \ac{pdf} of the channel is well defined and given by $p_{Y|X}(y|x) \triangleq \frac{\dkx}{\dy}(y)$. For an input probability measure $\mu \in \calP_2(\RR^n)$, the corresponding output  measure is $\nu = \int_{\RR^n} \kx \mu(\dx)$, which is obviously \ac{ac} and thus also possesses a \ac{pdf} $p_Y(y) \triangleq \frac{\dnu}{\dy}(y) = \int_{\RR^n} p_{Y|X}(y|x)\ \mu(\dx)$. For ease of analysis, the notations of \ac{pdf} and \ac{rnd} are used interchangeably in the following, depending on the context.

Defining a differentiable cost function on the input space $b: \RR^n \to [0, +\infty)$, the \ac{cc} function is given as\cite{gallager1968information}
\begin{equation*}
    C(B) = \sup_{\mu\in \calP_2(\RR^n)} I(X, Y), \quad \st \int_{\RR^n} b(x)\ \mu(\dx) \le B,
\end{equation*}
where the mutual information for the continuous case is
\begin{equation*}
    I(X,Y) = \int_{\RR^n}\mu(\dx) \int_{\RR^m} \kx(\dy) \log\frac{\dkx}{\dnu}(y)
\end{equation*}
if $\kx \ll \nu$ for all $x$ and infinite otherwise. Similarly to the discrete case, the corresponding Lagrangian function to be minimized reads as
\begin{equation}
    L_\lambda(\mu) \triangleq \int_{\RR^n}\mu(\dx) \br{\lambda b(x) - \int_{\RR^m} \kx(\dy) \log\frac{\dkx}{\dnu}(y)}
    \label{eq:lagrang}
\end{equation}
with $\lambda \ge 0$ representing the slope of the curve $C(B)$.

\subsection{Alternating Updates with Wasserstein Gradient Descent}

Motivated by the proximal point reformulation in \eqref{eq:kld_reform}, the proposed method replaces the \ac{kld} penalty term $D(P_X\| P_X^{(k-1)})$ by the Wasserstein distance between $\mu$ and $\mu^{(k-1)}$ at step $k$, resulting in a sequence of problems
\begin{equation}
\begin{split}
    &\mu^{(k)} = \arginf_{\mu} L_{\lambda} (\mu) + \frac{W_2^2(\mu, \mu^{(k-1)})}{\tau_k} - D(\nu \| \nu^{(k-1)}) \\
    &= \arginf_{\mu} \int_{\RR^n}\mu(\dx) \br{\lambda b(x) - \int_{\RR^m} \kx(\dy) \log\frac{\dkx}{\dnu^{(k-1)}}(y)}\\
    &\hspace{10mm} + \frac{W_2^2(\mu, \mu^{(k-1)})}{\tau_k}, 
\label{eq:wasser_prox}
\end{split}
\end{equation}
where $\nu^{(k-1)} = \int_{\RR^n} \kx\  \mu^{(k-1)} (\dx)$.
The optimality condition for the $k$-th step problem is that the first variation of its objective function with respect to $\mu$ is equal to a constant value\cite{ambrosio2008gradient}. It is also noted that the first part of the objective function takes the form of $\int V_{\lambda}^{(k)}(x) \mu(\dx)$ with
\begin{equation}
\begin{split}
    V_{\lambda}^{(k)}(x) &\triangleq \lambda b(x) - \int_{\RR^m} \kx(\dy) \log\frac{\dkx}{\dnu^{(k-1)}}(y)\\
    &=  \lambda b(x) - \int_{\RR^m} p_{Y|X}(y|x) \log\frac{ p_{Y|X}(y|x)}{p_Y^{(k-1)}(y)}\  \dy.
\end{split}
    \label{eq:Vx}
\end{equation}
This leads to 
\begin{equation*}
    V_{\lambda}^{(k)}(x) + \frac{\varphi^{(k)}(x)}{\tau_k} = \mathrm{const},
\end{equation*}
with the Kantorovich potential $\varphi^{(k)}(x)$ associated with $W_2(\mu, \mu^{(k-1)})$,
and thus $\nabla  V_{\lambda}^{(k)}(x) + \frac{\nabla \varphi^{(k)}(x)}{\tau_k} = 0$.
Leveraging Brenier's theorem, the optimal transport plan from $\mu^{(k)}$ to $\mu^{(k-1)}$ is obtained as
\begin{equation}
    T_{\mu^{(k)}}^{\mu^{(k-1)}}(x) \triangleq x - \nabla \varphi^{(k)}(x) = x + \tau_k \nabla  V_{\lambda}^{(k)}(x),
    \label{eq:muk_to_muk-1}
\end{equation}
and for sufficiently small $\tau_k$, the optimal transport plan from $\mu^{(k-1)}$ to $\mu^{(k)}$ is approximated as
\begin{equation}
    T^{(k)}(x) \triangleq x - \tau_k \nabla  V_{\lambda}^{(k)}(x).
    \label{eq:update_rule_x}
\end{equation}
Therefore, the update rule for $\mu^{(k)}$ becomes
\begin{equation}
    \mu^{(k)} = (T_{\mu^{(k)}}^{\mu^{(k-1)}})^{-1}_\# \mu^{(k-1)} \approx T^{(k)}_\# \mu^{(k-1)}.
    \label{eq:update_rule_mu}
\end{equation}

It is worth mentioning that the approximated solution \eqref{eq:update_rule_mu} to \eqref{eq:wasser_prox} can be considered analogously to the explicit Euler scheme of the gradient flow in Euclidean spaces\cite{santambrogio2017euclidean}, 
which is computationally efficient but may suffer from potential instability problems. An implicit scheme is also possible to derive but is beyond the scope of this paper. Nevertheless, simulation results show that the explicit scheme can still yield remarkable performance.

\subsection{Numerical Implementation}

Given $\mu \in \calP_2(\RR^n)$ and a set of $N$ particles $\{x_i\}_{i=1}^N$ sampled from $\mu$, the empirical probability measure $\mu_N \triangleq \frac{1}{N} \sum_{i=1}^N \delta_{x_i}$,
represented by averaging Dirac measures at sample points, is usually used to approximate $\mu$, where $\delta_{x_i}$ is the Dirac measure at $x_i$.
It also shows that the pushforward measure $T_\# \mu_N$ for any measurable function $T$ is again an empirical measure consisting of a set of particles $\{y_i\}_{i=1}^N$ with $y_i = T(x_i)$ for all $i$.
Hence, starting from a initial empirical measure $\mu_N^{(0)} = \frac{1}{N} \sum_{i=1}^N \delta_{x_i^{(0)}}$, the update rule \eqref{eq:update_rule_mu} results in a sequence of particle sets $\{x_i^{(k)}\}_{i=1}^N$ for $k=1,2,...$ with
\begin{equation}
    x_i^{(k)} = T^{(k)}(x_i^{(k-1)}) = x_i^{(k-1)} - \tau_k \nabla V_{\lambda}^{(k)}(x_i^{(k-1)}), \  \forall i,
    \label{eq:update_particle}
\end{equation}
where $p_Y^{(k-1)}(y)$ in $V_{\lambda}^{(k)}(x_i^{(k-1)})$ is computed based on $\mu_N^{(k-1)}$ as
\begin{equation}
\begin{split}
    p_Y^{(k-1)}(y) &= \int_{\RR^n} \frac{1}{N} \sum_{i=1}^N p_{Y|X}(y|x) \delta_{x^{(k-1)}_i}(\dx)\\
    &= \frac{1}{N} \sum_{i=1}^N  p_{Y|X}(y|x^{(k-1)}_i).
    \label{eq:py_approx}
\end{split}
\end{equation}

Besides representing $\mu^{(k)}$ by a particle set, computing $V_{\lambda}^{(k)}(x)$ and $\nabla V_{\lambda}^{(k)}(x)$ also requires numerical approaches because the involved integration in \eqref{eq:Vx} typically has no analytical expression. The finite partition-based method is used in\cite{dauwels2005numerical}, which shows high accuracy but leads to exponentially increasing computational complexity in dimension. To this end, the \ac{is} technique is adopted here as it can balance complexity and accuracy by choosing different numbers of samples.

To compute an integral $I = \int f(a) \mathrm{d} a$, the basic idea of \ac{is} is to draw a set of $N_s$ samples $\{a_i\}_{i=1}^{N_s}$ from a pre-defined distribution $q(a)$, called the importance distribution\cite{candy2016bayesian}, and then approximate the integration $I$ by 
\begin{equation*}
    I \approx \hat{I} = \int  \frac{1}{N_s} \sum_{i=1}^{N_s} \frac{f(a)}{q(a)} \delta_{a_i}(\mathrm{d}a) = \frac{1}{N_s} \sum_{i=1}^{N_s} \frac{f(a_i)}{q(a_i)}.
\end{equation*}
The similar procedure is applied to $V_{\lambda}^{(k)}(x)$ and $\nabla V_{\lambda}^{(k)}(x)$, resulting in 
\begin{equation}
    \hat{V}_{\lambda}^{(k)}(x) =\lambda b(x) - \frac{1}{N_s} \sum_{i=1}^{N_s} \frac{p_{Y|X}(y_i|x) \log\frac{p_{Y|X}(y_i|x)}{p_Y^{(k-1)}(y_i)}}{q(y_i|x)}
    \label{eq:vx_approx}
\end{equation}
and
\begin{equation}
\begin{split}
    &\widehat{\nabla V}_{\lambda}^{(k)}(x)= \\
    &\quad \lambda \nabla b(x) - \frac{1}{N_s} \sum_{i=1}^{N_s} \frac{ \nabla \br{p_{Y|X}(y_i|x) \log\frac{p_{Y|X}(y_i|x)}{p_Y^{(k-1)}(y_i)}}}{q(y_i|x)},
    \label{eq:dvx_approx}
\end{split}
\end{equation}
where it is emphasized that the importance distribution $q(y|x)$ may also depend on $x$.

In the above analysis, the Lagrangian multiplier $\lambda$ is fixed throughout. As a result, the final cost value is unknown in advance. In order to obtain the channel capacity at a particular cost, methods such as bisection search are commonly adopted but require running the whole algorithm multiple times at different values of $\lambda$. To deal with this problem, the dual ascent method\cite{boyd2011distributed} is considered, which additionally updates $\lambda$ during running. 

Putting steps together, the pseudocodes of the proposed numerical method are summarized in Algorithm \ref{alg:compute}. The outputs contain not only the resulting particles representing the optimal input distribution for the considered channel but also the approximated objective values, rate, and cost. In practice, the computation of gradient involved in \eqref{eq:dvx_approx} can be conducted by various automatic differentiation frameworks like PyTorch, and the step size $\tau_k$ may also be adjusted using adaptive optimizers such as Adam\cite{kingma2014adam}.

\begin{algorithm}[t]
\caption{Numerical computation of \ac{cc} function}
\begin{algorithmic}
    \State \textbf{Inputs:} 
    \State Channel distribution $p_{Y|X}$; cost function $b$; initial measure $\mu^{(0)}$; number of particles $N$; sequence of importance distributions $q^{(0)}(y|x), q^{(1)}(y|x),...$; number of \ac{is} samples $N_s$; sequence of step sizes $\tau_1, \tau_2,...$; initial dual multiplier $\lambda_0$
    \State \textit{Optional}: cost upper bound $B$, sequence of step sizes for dual ascent update $\alpha_0, \alpha_1,...$
    \State \textbf{Initialize:}
    \State Sample a particle set $\{x_i^{(0)}\}_{i=1}^N$ from $\mu^{(0)}$;
    \State $k \gets 0$
    \While{convergence condition not met}
    \State Sample a particle set $\{y_{j_i}^{(k)}\}_{j_i=1}^{N_s}$ from $q^{(k)}(y|x_i^{(k)})\  \forall i$
    \State Compute $p_Y^{(k)}(y_{j_i}^{(k)})$ via \eqref{eq:py_approx}
    \State Compute $\hat{V}_{\lambda_k}^{(k+1)}(x_i^{(k)})$, $\widehat{\nabla V}_{\lambda_k}^{(k+1)}(x_i^{(k)})$ via \eqref{eq:vx_approx}, \eqref{eq:dvx_approx}
    \State $\hat{L}_{\lambda_k} \gets \frac{1}{N} \sum_{i=1}^N \hat{V}_{\lambda_k}^{(k+1)}(x_i^{(k)})$ \Comment{see \eqref{eq:lagrang}}
    \State $\hat{B}^{(k)} \gets \frac{1}{N} \sum_{i=1}^N b(x_i^{(k)})$
    \State $\hat{R}^{(k)} \gets  \lambda_k \hat{B}^{(k)}- \hat{L}_{\lambda_k}$
    \State $x_i^{(k+1)} \gets x_i^{(k)} - \tau_{k+1} \widehat{\nabla V}_{\lambda_k}^{(k+1)}(x_i^{(k)})$ \Comment{see \eqref{eq:update_rule_x}, \eqref{eq:update_particle}}
    \If{dual ascent update required}
    \State $\lambda_{k+1} \gets \max\{0,\lambda_k + \alpha_k (\hat{B}^{(k)} - B)\}$
    \Else
    \State $\lambda_{k+1} \gets \lambda_k$
    \EndIf
    \State $k \gets k+1$
    \EndWhile
    \State Evaluate $\hat{L}_{\lambda_{k}}$, $\hat{R}^{(k)}$, $\hat{B}^{(k)}$ on $\{x_i^{(k)}\}_{i=1}^N$
    \State \textbf{Return:}
    \State $\{x_i^{(k)}\}_{i=1}^N$, $\hat{L}_{\lambda_{k}}$, $\hat{R}^{(k)}$, $\hat{B}^{(k)}$
\end{algorithmic}\label{alg:compute}
\end{algorithm}

\section{Rate-Distortion Function}
In fact, computing \ac{rd} functions for continuous source space with Wasserstein gradient descent is already studied in \cite{yang2024estimating}. 
Nonetheless, this section gives a brief review by interpreting the \ac{rd} problem within the proposed proximal point structure, aiming at providing a unified computation framework for both \ac{cc} and \ac{rd} problems.

Let $\RR^n$ be the space of source and reconstruction. Given a source probability measure $\mu \in \calP_2(\RR^n)$ and a distortion function $d: \RR^n \times \RR^n \to [0, +\infty)$, the original \ac{rd} problem is formulated as
\begin{equation}
\begin{split}
    R(D) &\triangleq \inf_{\pi \in \Pi(\mu,)}  I(X,Y) \\
    &\st \int_{\RR^n\times \RR^n} d(x,y)\ \pi(\dx, \dy) \le D .
    \label{eq:rd_prob}
\end{split}
\end{equation} 
If restricting $\pi$ to the form of $\pi(A\times B) = \int_A \kx(B) \mu(\dx) $ for all $A, B \in \mathcal{B}(\RR^n)$ and some Markov kernels $\kappa$, the problem \eqref{eq:rd_prob} is then equivalent to $\inf_\nu G_\lambda(\nu)$ with
\begin{equation*}
    G_\lambda(\nu) \triangleq - \int_{\RR^n} \mu(\dx) \log \int_{\RR^n} \nu(\dy) \exp(-\lambda d(x,y)),
\end{equation*}
where $\lambda$ is the Lagrangian multiplier, and $\nu$ is the reconstruction measure. Similarly to \eqref{eq:wasser_prox}, the update rule for $\nu^{(k)}$ based on the Wasserstein proximal point method at step $k$ is
\begin{align*}
    \nu^{(k)} &= \arginf_{\nu} G_\lambda(\nu) + \frac{W_2^2(\nu, \nu^{(k-1)})}{\tau_k}\\
    & = \br{\mathrm{Id} + \tau_k \nabla \frac{\delta G_\lambda}{\delta \nu}(\nu^{(k)})}_{\#}^{-1}\nu^{(k-1)}\\
    & \approx \br{\mathrm{Id} - \tau_k \nabla \frac{\delta G_\lambda}{\delta \nu}(\nu^{(k-1)})}_{\#}\nu^{(k-1)},
\end{align*}
in which the second line follows the optimality condition, and the last line is due to the explicit approximation scheme. The first variation of $G_\lambda$ is derived in \cite{yang2024estimating} and given by
\begin{equation*}
    \frac{\delta G_\lambda}{\delta \nu}(\nu) = - \int_{\RR^n} \frac{\exp(-\lambda d(x, \cdot))}{\int_{\RR^n} \exp(-\lambda d(x, \tilde{y})) \nu(\mathrm{d}\tilde{y})} \mu(\dx).
\end{equation*}

Again, the \ac{is} technique can be used to numerically compute $\frac{\delta G_\lambda}{\delta \nu}(\nu)$ as well as its gradient. Thus, the algorithm in \cite{yang2024estimating} is recovered, demonstrating the generalizability of the proposed computation method.

\section{Convergence Analysis}
Like the convergence behavior of the Euclidean gradient flow\cite{santambrogio2017euclidean}, it shows that the Wasserstein proximal point method \eqref{eq:wasser_prox} and its explicit scheme \eqref{eq:update_rule_x} can also yield a ``stationary point" in $\calW_2$ under certain conditions. 

It is assumed in the following $\kx \ll \nu$ for all $x$ such that the mutual information is finite. This can be satisfied when, for instance, $\kx(B) > 0$ for all $B\in \mathcal{B}(\RR^m)$. The following theorem states a sufficient condition for \eqref{eq:wasser_prox} producing a converging sequence of $\mu^{(k)}$.
\begin{theorem}\label{th:stationary}
    Let $\{\mu^{(k)}\}_{k=1}^\infty$ be the sequence of probability measures generated by \eqref{eq:wasser_prox}, and suppose for all $k$ the step sizes satisfy
    \begin{equation}
       0 < \tau_k < \inf_{\mu \neq \mu^{(k-1)}} \frac{W_2^2 (\mu, \mu^{(k-1)})}{D(\nu \| \nu^{(k-1)})}.
    \end{equation}
    Then $\mu^{(k)}$ converges to a measure $\mu^*$ in the sense that 
    \begin{equation}
    \lim_{k\to \infty} W_2(\mu^{(k)}, \mu^*) = 0,
    \end{equation}
    such that
    \begin{equation}
       \lim_{k\to \infty} \int_{\RR^n} \| \nabla V_\lambda^{(k)} (x) \|_2^2 \ \mu^*(\dx) = 0.
       \label{eq:wasser_grad_cvg}
    \end{equation}
\end{theorem}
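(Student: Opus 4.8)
The plan is to exploit the defining infimization property of each $\mu^{(k)}$ to obtain a strict monotone descent of the Lagrangian $L_\lambda$, to convert the accumulated descent into summable control on the consecutive steps $W_2(\mu^{(k)},\mu^{(k-1)})$, and then to read off the vanishing gradient through the Brenier identity already used to derive \eqref{eq:muk_to_muk-1}.

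First I would evaluate the objective of \eqref{eq:wasser_prox} at the feasible competitor $\mu=\mu^{(k-1)}$, where both $W_2^2(\mu^{(k-1)},\mu^{(k-1)})$ and $D(\nu^{(k-1)}\|\nu^{(k-1)})$ vanish, so that its value is exactly $L_\lambda(\mu^{(k-1)})$. Since $\mu^{(k)}$ is the infimizer, this gives
\[
L_\lambda(\mu^{(k-1)})-L_\lambda(\mu^{(k)})\ge \frac{W_2^2(\mu^{(k)},\mu^{(k-1)})}{\tau_k}-D(\nu^{(k)}\|\nu^{(k-1)}),
\]
and the step-size hypothesis forces the right-hand side to be strictly positive whenever $\mu^{(k)}\ne\mu^{(k-1)}$, so $L_\lambda(\mu^{(k)})$ is strictly decreasing. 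Writing the bound as $\tau_k<c_k$ with $c_k\triangleq\inf_{\mu\ne\mu^{(k-1)}}W_2^2(\mu,\mu^{(k-1)})/D(\nu\|\nu^{(k-1)})$, the same condition yields $D(\nu^{(k)}\|\nu^{(k-1)})\le W_2^2(\mu^{(k)},\mu^{(k-1)})/c_k$, so the per-step decrease is at least $(1/\tau_k-1/c_k)\,W_2^2(\mu^{(k)},\mu^{(k-1)})$.

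Next I would assume $L_\lambda$ is bounded below (which follows from the coercivity of the cost $b$ relative to the logarithmic growth of the mutual information, as in the power-constrained Gaussian case). Monotonicity plus boundedness makes $L_\lambda(\mu^{(k)})$ convergent, so telescoping the decrements gives $\sum_k(1/\tau_k-1/c_k)\,W_2^2(\mu^{(k)},\mu^{(k-1)})<\infty$ and hence $W_2(\mu^{(k)},\mu^{(k-1)})\to 0$, provided the step sizes keep a uniform gap below $c_k$ and away from zero. I would then invoke the optimality condition preceding the theorem, $\nabla\varphi^{(k)}=-\tau_k\nabla V_\lambda^{(k)}$, together with Brenier's theorem for the optimal map $T_{\mu^{(k)}}^{\mu^{(k-1)}}=\mathrm{Id}-\nabla\varphi^{(k)}$, to get the exact identity
\[
W_2^2(\mu^{(k)},\mu^{(k-1)})=\int_{\RR^n}\|\nabla\varphi^{(k)}(x)\|_2^2\,\mu^{(k)}(\dx)=\tau_k^2\int_{\RR^n}\|\nabla V_\lambda^{(k)}(x)\|_2^2\,\mu^{(k)}(\dx),
\]
so that $\int\|\nabla V_\lambda^{(k)}\|_2^2\,\mu^{(k)}(\dx)=W_2^2(\mu^{(k)},\mu^{(k-1)})/\tau_k^2\to 0$.

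Finally, to reach the stated limit against the fixed $\mu^*$, I would establish $W_2$-precompactness of $\{\mu^{(k)}\}$ from uniform second-moment bounds (again via coercivity of $b$), extract $\mu^*$, and transfer the vanishing integral from $\mu^{(k)}$ to $\mu^*$ using $W_2$-convergence and a uniform Lipschitz/growth bound on $\nabla V_\lambda^{(k)}$ (whose $k$-dependence enters only through $\nu^{(k-1)}\to\nu^*$). I expect the \textbf{main obstacle} to be upgrading ``consecutive steps vanish'' and ``a subsequential limit exists'' to genuine convergence of the entire sequence in $W_2$: square-summability of $W_2(\mu^{(k)},\mu^{(k-1)})$ does not by itself yield a Cauchy sequence, and since $L_\lambda$ built from mutual information is not geodesically convex in $\calW_2$, no Fej\'er-type contraction toward $\mu^*$ is available globally. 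This is precisely where the \emph{local} character of the convergence enters, and I would close the gap by assuming either a unique stationary limit point or geodesic convexity in a neighborhood of $\mu^*$.
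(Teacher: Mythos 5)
Your proposal follows essentially the same route as the paper's proof: the descent inequality from evaluating the proximal objective at $\mu^{(k-1)}$, the step-size condition forcing strict decrease of $L_\lambda$, vanishing consecutive Wasserstein distances, and the Brenier identity $W_2^2(\mu^{(k)},\mu^{(k-1)})=\tau_k^2\int\|\nabla V_\lambda^{(k)}(x)\|_2^2\,\mu^{(k)}(\mathrm{d}x)$ to read off the vanishing gradient. If anything, you are more careful than the paper at exactly the delicate points: the paper asserts convergence of $\mu^{(k)}$ merely ``by completeness'' (the Cauchy gap you correctly flag as the main obstacle, which it never closes), requires only $\tau_k>0$ where your uniform lower bound on the step sizes is actually needed for $W_2^2/\tau_k^2\to 0$, silently transfers the integral from the iterates to $\mu^*$ without the uniform growth bound you identify, and writes the transport-cost integral against $\mu^{(k-1)}$ although the source measure of $T_{\mu^{(k)}}^{\mu^{(k-1)}}$ is $\mu^{(k)}$, as in your version.
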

\begin{proof}
    The condition on $\tau_k$ guarantees the monotone decreasing of the sequence $L_{\lambda}(\mu^{(k)})$, which thus converges to a limit point because $L_{\lambda}$ is bounded. This in turn implies $W_2^2 (\mu^{(k)}, \mu^{(k-1)}) \to 0$ because the choice of $\tau_k$ guarantees that the proximal term is strictly positive unless $\mu^{(k)} =\mu^{(k-1)}$. Thus, $\mu^{(k)}$ converges to a limit point, denoted as $\mu^*\in \calP_2(\RR^n)$, due to the completeness of $\calW_2$\cite{villani2009optimal}. Next note that from \eqref{eq:muk_to_muk-1},
    \begin{equation}
        W_2^2(\mu^{(k)}, \mu^{(k-1)}) = \tau_k^2 \int\|\nabla V_\lambda^{(k)} (x)\|_2^2\  \mu^{(k-1)}(\dx).
    \end{equation}
    Taking the limit as $k\to \infty$, the right-hand side tends to zero. Furthermore, because $\tau_k > 0$ for all $k$ and the convergence in $\calW_2$ is equivalent to the weak convergence of measures\cite{ambrosio2008gradient}, 
    the second part \eqref{eq:wasser_grad_cvg} is proven.
%
\end{proof}

The next theorem shows that the proposed explicit scheme \eqref{eq:update_rule_x} can also produce a converging sequence of measures but with different requirements on the step sizes. The proof follows similar arguments as \cite{yang2024estimating} and thus is skipped here.

\begin{theorem}
    Let $\{\mu^{(k)}\}_{k=1}^\infty$ be the sequence of probability measures generated by the explicit approximation step \eqref{eq:update_rule_mu}. Suppose the sequence of step sizes satisfy $\sum_{k=1}^\infty \tau_k = \infty$, $\sum_{k=1}^\infty \tau_k^2 < \infty$, and $\sum_{k=1}^\infty \allowbreak D(\nu^{(k)} \|\nu^{(k-1)})  < \infty$, then
    \begin{equation}
        \lim_{k\to \infty} \int_{\RR^n} \| \nabla V_\lambda^{(k)} (x) \|_2^2 \ \mu^{(k)}(\dx) = 0.
    \end{equation}
\end{theorem}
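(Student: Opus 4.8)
The plan is to mirror the Robbins--Monro style analysis of a descent method with diminishing step sizes: show that $L_\lambda$ decreases along the iterates up to summable errors, extract a summable weighted gradient series, and then invoke $\sum_k \tau_k = \infty$ to force the gradient norm to vanish. The starting point is a Blahut--Arimoto-type identity that lets one read $L_\lambda$ as a ``frozen-$\nu$'' linear functional. With $V_\lambda^{(k)}$ as in \eqref{eq:Vx} (which uses $\nu^{(k-1)}$), I would first establish
\begin{align*}
    \int_{\RR^n} V_\lambda^{(k)}(x)\,\mu^{(k-1)}(\dx) &= L_\lambda(\mu^{(k-1)}),\\
    \int_{\RR^n} V_\lambda^{(k)}(x)\,\mu^{(k)}(\dx) &= L_\lambda(\mu^{(k)}) - D(\nu^{(k)}\|\nu^{(k-1)}).
\end{align*}
The first line holds because $\nu^{(k-1)}$ is exactly the output marginal of $\mu^{(k-1)}$. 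The second follows by splitting $\log\frac{\dkx}{\dnu^{(k-1)}} = \log\frac{\dkx}{\dnu^{(k)}} + \log\frac{\dnu^{(k)}}{\dnu^{(k-1)}}$, recognizing the first piece as the mutual information at $\mu^{(k)}$ and the second as $\int \nu^{(k)}(\dy)\log\frac{\dnu^{(k)}}{\dnu^{(k-1)}}(y) = D(\nu^{(k)}\|\nu^{(k-1)})$, using $\nu^{(k)} = \int \kx\,\mu^{(k)}(\dx)$.

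Second, I would exploit the explicit pushforward structure $\mu^{(k)} = T^{(k)}_\#\mu^{(k-1)}$ with $T^{(k)}(x) = x - \tau_k \nabla V_\lambda^{(k)}(x)$ from \eqref{eq:update_rule_x}--\eqref{eq:update_rule_mu}. Assuming $V_\lambda^{(k)}$ has an $L$-Lipschitz gradient uniformly in $k$, a second-order Taylor bound applied pointwise and integrated against $\mu^{(k-1)}$ yields
\begin{equation*}
    \int V_\lambda^{(k)}(x)\,\mu^{(k)}(\dx) - \int V_\lambda^{(k)}(x)\,\mu^{(k-1)}(\dx) \le -\tau_k\br{1 - \tfrac{L\tau_k}{2}}\int \| \nabla V_\lambda^{(k)}(x)\|_2^2\,\mu^{(k-1)}(\dx).
\end{equation*}
Combining with the two identities gives the per-step descent
\begin{equation*}
    L_\lambda(\mu^{(k)}) - L_\lambda(\mu^{(k-1)}) \le -\tau_k\br{1 - \tfrac{L\tau_k}{2}}\int \| \nabla V_\lambda^{(k)}(x)\|_2^2\,\mu^{(k-1)}(\dx) + D(\nu^{(k)}\|\nu^{(k-1)}).
\end{equation*}

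Third, I would telescope over $k$. Since $L_\lambda$ is bounded, $\sum_k\tau_k^2 < \infty$ (so $1 - \tfrac{L\tau_k}{2} \ge \tfrac12$ eventually), and $\sum_k D(\nu^{(k)}\|\nu^{(k-1)}) < \infty$ by hypothesis, the right-hand side sums to a finite quantity, leaving
\begin{equation*}
    \sum_{k=1}^\infty \tau_k \int \| \nabla V_\lambda^{(k)}(x)\|_2^2\,\mu^{(k-1)}(\dx) < \infty.
\end{equation*}
To move the integration onto $\mu^{(k)}$, I would use $\|\nabla V_\lambda^{(k)}(T^{(k)}(x))\| \le (1+L\tau_k)\|\nabla V_\lambda^{(k)}(x)\|$ with the change of variables under the pushforward, giving $\int \|\nabla V_\lambda^{(k)}\|_2^2\,\mu^{(k)}(\dx) \le (1+L\tau_k)^2 \int \|\nabla V_\lambda^{(k)}\|_2^2\,\mu^{(k-1)}(\dx)$; since $\tau_k$ is bounded this preserves summability, so $\sum_k \tau_k \int \|\nabla V_\lambda^{(k)}\|_2^2\,\mu^{(k)}(\dx) < \infty$. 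Together with $\sum_k\tau_k = \infty$, this forces $\liminf_k \int \|\nabla V_\lambda^{(k)}\|_2^2\,\mu^{(k)}(\dx) = 0$.

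The main obstacle is upgrading this $\liminf$ to the full $\lim$ claimed in the statement. I expect this to need an ``iteration-Lipschitz'' estimate $|a_{k+1} - a_k| \le C\tau_{k+1}$ for $a_k \triangleq \int \|\nabla V_\lambda^{(k)}\|_2^2\,\mu^{(k)}(\dx)$: the one-step displacement is $O(\tau_k)$ in $W_2$, and the drift of $V_\lambda^{(k)}$ across $k$ is controlled through $\nu^{(k-1)}$ by $D(\nu^{(k)}\|\nu^{(k-1)}) \to 0$, so $a_k$ cannot oscillate faster than the step size; the standard lemma ``$\sum_k\tau_k a_k < \infty$, $\sum_k\tau_k = \infty$, $|a_{k+1}-a_k|\le C\tau_{k+1}$ imply $a_k \to 0$'' then closes the argument. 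The two delicate points are therefore (i) the uniform $L$-smoothness of $V_\lambda^{(k)}$, which rests on keeping $\nu^{(k-1)}$ nondegenerate uniformly in $k$, and (ii) the iteration-Lipschitz bound on $a_k$; these are precisely where the analysis of \cite{yang2024estimating} does the substantive work.
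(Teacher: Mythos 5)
The paper never writes out a proof of this theorem: it says the argument ``follows similar arguments as \cite{yang2024estimating}'' and skips it, so your reconstruction can only be compared against that intended route. On that basis you are on target, and the core of your proposal is verifiably correct: the two identities $\int V_\lambda^{(k)}\,\mathrm{d}\mu^{(k-1)} = L_\lambda(\mu^{(k-1)})$ and $\int V_\lambda^{(k)}\,\mathrm{d}\mu^{(k)} = L_\lambda(\mu^{(k)}) - D(\nu^{(k)}\|\nu^{(k-1)})$ (via the split $\log\frac{\dkx}{\dnu^{(k-1)}} = \log\frac{\dkx}{\dnu^{(k)}} + \log\frac{\dnu^{(k)}}{\dnu^{(k-1)}}$ and Fubini, using $\nu^{(k)} = \int \kx\,\mu^{(k)}(\dx)$) are exactly what makes the hypothesis $\sum_k D(\nu^{(k)}\|\nu^{(k-1)}) < \infty$ appear in the statement, and combined with the $L$-smooth descent inequality and telescoping they give $\sum_k \tau_k \int \|\nabla V_\lambda^{(k)}\|_2^2\,\mathrm{d}\mu^{(k-1)} < \infty$, hence, after your pushforward change of measure, the $\liminf$ conclusion. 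This is the Robbins--Monro-type analysis that the cited work carries out under explicit regularity assumptions, so in approach you match what the paper gestures at.

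Two caveats, both sitting exactly where you flag delicacy. First, the uniform $L$-Lipschitzness of $\nabla V_\lambda^{(k)}$ (and, for the final step, a uniform bound on $\sup_x\|\nabla V_\lambda^{(k)}(x)\|$) is not implied by the theorem's stated hypotheses; it is an extra assumption on the channel kernel and on the uniform nondegeneracy of $p_Y^{(k)}$, which the cited work imposes explicitly, and you should state it as a hypothesis rather than leave it as ``assuming''. Second, and more substantively, your proposed source for the iteration-Lipschitz bound $|a_{k+1}-a_k| \le C\tau_{k+1}$ is off: summability of $D(\nu^{(k)}\|\nu^{(k-1)})$ gives $D(\nu^{(k)}\|\nu^{(k-1)}) \to 0$ but carries no rate, so it cannot deliver an $O(\tau_{k+1})$ drift estimate. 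The correct control is geometric: $W_2(\mu^{(k)},\mu^{(k-1)}) \le \tau_k \sup_x \|\nabla V_\lambda^{(k)}(x)\|$, and under the kernel regularity above the dependence of $\nabla V_\lambda^{(k)}$ on $k$ enters only through $p_Y^{(k-1)}$, whose log-difference across one step is Lipschitz-controlled by that Wasserstein displacement when the output densities are bounded below on the relevant sets; together with the $O(\tau_{k+1})$ particle displacement this yields $|a_{k+1}-a_k| \le C(\tau_k + \tau_{k+1})$, after which the standard series lemma you quote (nonnegative $a_k$, $\sum_k \tau_k a_k < \infty$, $\sum_k \tau_k = \infty$, slowly varying $a_k$ imply $a_k \to 0$) does close the argument. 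With that substitution, and the regularity assumptions made explicit, your proof is complete and consistent with the route the paper delegates to \cite{yang2024estimating}.
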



\section{Numerical Results}
We first compute the \ac{cc} function for MIMO-AWGN channels with unit variance noise and cost constraint given by $\int \|x\|_2^2\  \mu(\dx) \le P$. Algorithm \ref{alg:compute} runs for $10\log_{10} P$ from $-10$ dB to $20$ dB, number of antennas $\{1\times1, 2\times2, 16\times16\}$, and number of particles $N\in\{64, 128, 256\}$ for $\{1\times1, 2\times2\}$ and $N\in\{64, 128\}$ for $16\times 16$. For the number of antennas greater than $1$, the channel matrices are initialized such that its square of eigenvalues sum to $1$. Dual ascent steps are also applied, so the resulting input power approaches the required level. The comparison between computation and theoretical results is illustrated in Fig. \ref{fig:mimoawgn_error}, and the resulting particle distributions for $1\times 1$ and $2\times 2$ cases in Fig. \ref{fig:mimo_particles}. It shows that the proposed method can compute comparable results to the theoretical values even for high-dimensional cases.

\begin{figure}
    \centering
        \includegraphics{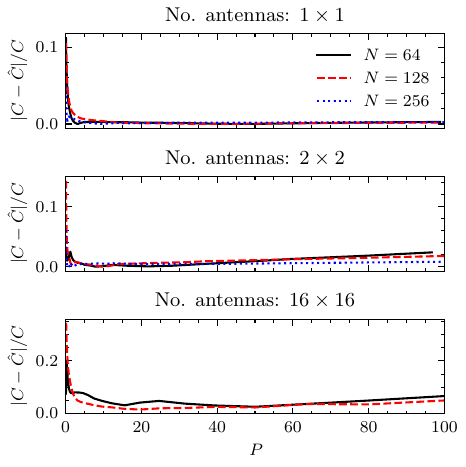}
        \caption{Error between the capacity $C$ and the computed values $\hat{C}$ vs. input power $P$ for different number of particles.}
    \label{fig:mimoawgn_error}
\end{figure}

\begin{figure}
    \centering
    \begin{subfigure}[c]{\linewidth}
    \centering
        \includegraphics{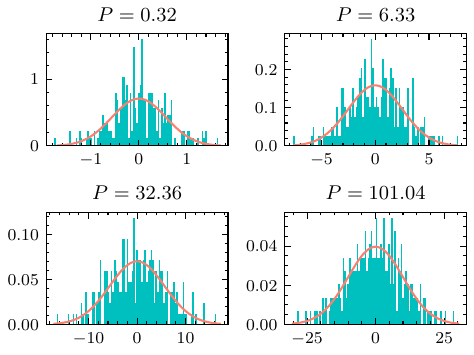}
        \caption{Single antenna: histograms of particles.}
    \end{subfigure}
    \centering
    \begin{subfigure}[c]{\linewidth}
    \centering
        \includegraphics{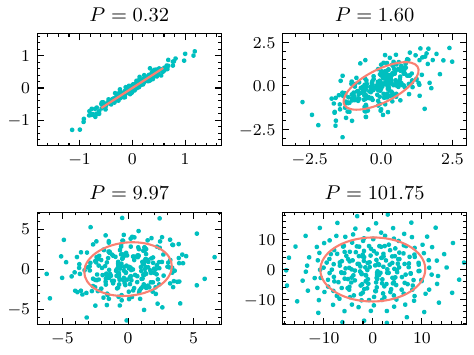}
        \caption{$2\times2$ antennas: locations of particles.}
    \end{subfigure}
    \caption{Resulting particles (blue) vs. theoretical optimal input distributions (red).}
    \label{fig:mimo_particles}
\end{figure}

The following example computes the capacity of the SISO Rayleigh fading channel $y=sx+z$ where $s$ and $z$ both follow Gaussian distribution $\calN(0,1)$, and the cost is the input power. We consider two cases: with and without \ac{csir}. For the case with \ac{csir}, the channel capacity is known as $\frac{1}{2}\mathbb{E}[\log(1+P|s|^2)]$\cite{goldsmith1997capacity} achieved by the Gaussian input distribution $\calN(0,P)$, where the expectation is taken with respect to $s$. To perform the proposed method, the \ac{csir} can be viewed as another channel output so that the equivalent channel distribution is given by $p_{YS|X}(y,s|x) = p_{Y|XS}(y|x,s)p_{S}(s)$. For the case without \ac{csir}, there is still no explicit form for the channel capacity. It is proven that the optimal input distribution is discrete with a finite number of mass points, one of which is always located at zero \cite{abou2001capacity}. The effective channel distribution is obtained as $p_{Y|X}(y|x) = \int p_{Y|XS}(y|x,s)p_{S}(s) \mathrm{d}s$ in this case, where the \ac{is} method may also be adopted. Fig. \ref{fig:fading_capacity} and \ref{fig:fading_particles} show the computation results as well as the particle values changing with steps, where the lower bound refers to the data rate computed by Gaussian input signaling for no \ac{csir} case. These outcomes again demonstrate the highly accurate performance of the proposed method.

\begin{figure}
    \centering
    \includegraphics{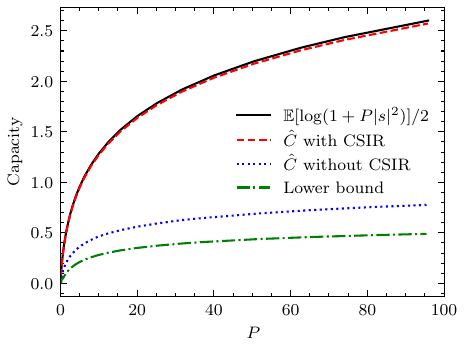}
    \caption{Results for fading channel with and without \ac{csir}.}
    \label{fig:fading_capacity}
\end{figure}

\begin{figure}
    \centering
    \includegraphics{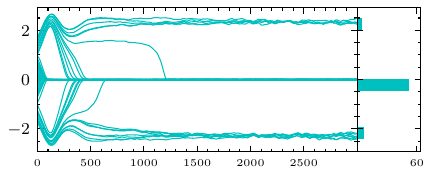}
    \caption{Evolution of particles over update steps and final histogram for fading channel without \ac{csir}, $P=1$.}
    \label{fig:fading_particles}
\end{figure}

\section{Conclusion}
This work proposes a numerical method to compute the \ac{cc} function for continuous channels based on Wasserstein gradient descent and \ac{is}. The convergence to a stationary point in the Wasserstein space is proved, and simulation results further validate the algorithm's effectiveness. 

\bibliographystyle{IEEEtran}
\bibliography{IEEEabrv,mybib.bib}
\end{document}